\begin{document}
\newcommand{\ket}[1]{\ensuremath{\left|#1\right\rangle}}
\newcommand{\bra}[1]{\ensuremath{\left\langle#1\right|}}
\newcommand\floor[1]{\lfloor#1\rfloor}
\newcommand\ceil[1]{\lceil#1\rceil}
\newtheorem{definition}{Definition}
\newtheorem{theorem}{Theorem}
\newtheorem{claim}{Claim}
\newtheorem{lemma}{Lemma}
\title{Device Independent Quantum Secret Sharing in Arbitrary Even Dimension}
\author{Sarbani Roy}
\email{sarbani.roy@iitkgp.ac.in}
\author{Sourav Mukhopadhyay}
\email{msourav@gmail.com}
\affiliation{Department of Mathematics, Indian Institute of Technology Kharagpur, West Bengal 721302, India.}

\begin{abstract}
We present a device independent quantum secret sharing scheme in arbitrary even dimension. We propose a $d$-dimensional $N$-partite linear game, utilizing a generic multipartite higher dimensional Bell inequality, a generalization of Mermin's inequality in the higher dimension. Probability to win this linear game defines the device independence test of the proposed scheme. The security is proved under causal independence of measurement devices and it is based on the polygamy property of entanglement. By defining $\epsilon_{cor}$-correctness and $\epsilon^c$-completeness for a quantum secret sharing scheme, we have also shown that the proposed scheme is $\epsilon_{cor}$-correct and $\epsilon^c$-complete.
\end{abstract}
\maketitle

\section{Introduction}
Quantum physics pledges the security of cryptographic protocols from an adversary having unbounded computational power. The journey of quantum cryptography was started in 1984, when Bennett and Brassard~\cite{BB84} proposed a Quantum Key distribution (QKD) protocol to share a common secret key between two distant users in a quantum mechanical way. Although it has been proven that the BB84 QKD protocol~\cite{BB84} is unconditionally secure~\cite{mayers,shor}, but the practical implementation of QKD confronted side channel attacks in which classical information gets leaked~\cite{bbbss,blms,gllskm}. To avert these type of attacks Mayers and Yao~\cite{mayersyao} proposed the idea of device independence in which the security of a scheme is based on a statistical test performed on the inputs and outputs of two spatially separated measurement devices. Since then numerous proposals~\cite{diqkd1,diqkd2,diqkd3,diqkd4,diqkd5,diqkd6,diqkd7,diqkd8,diqkd9,diqkd10,diqkd11} have been suggested for device independent QKD (DI-QKD). Although, recently researchers were became interested to explore device independent security outside QKD~\cite{outqkd1,outqkd2,outqkd3,outqkd4,outqkd5,outqkd6,outqkd7}, but it is very little known till today. A statistical test based on the quantum violation of Bell inequality~\cite{bell} is performed for the security of DI-QKD. Nowadays, quantum cryptography is observant of more multipartite higher dimensional schemes. Such schemes can be proven secure under device independence paradigm by employing the higher dimensional multi-particle Bell inequality.

Quantum nonlocality is evident by measuring two entangled particles in two spatially separated regions. Bell~\cite{bell} showed that the quantum correlations obtained by measuring two spatially separated entangled particles, can not be reproduced by local realism. Allowing more flexibility in local measurements than original Bell inequality; Clauser, Horne, Shimony, and Holt~\cite{chsh} proposed an inequality (CHSH inequality) which is maximally violated when each particle of a two-dimensional bipartite maximally entangled state is measured by two mutually unbiased bases. Then it was a natural question to generalize Bell inequality for a multipartite and/or multidimensional system. Mermin~\cite{mermin} derived such a violation for arbitrarily many qubits and showed that the degree of violation increases exponentially with the number of involved particles. Ardehali~\cite{arde}, Roy and Singh~\cite{rs}, Belinskii and Klyshko~\cite{bk} pursued in this direction further. Generalization of Bell inequality for the bipartite multidimensional system has been investigated in~\cite{bellmd1,bellmd2,bellmd3,bellmd4,bellmd5}. The long-awaited proof of the violation of Bell type correlations in a higher dimensional multipartite system was presented by Son {\it et al.}~\cite{son}. However, that was further generalized for more than two possible settings in the measurement device~\cite{ms1,ms2}. The applications of multipartite arbitrary dimensional Bell inequality in quantum information science is still to explore.

After QKD, quantum secret sharing (QSS) is the most remarkable example of quantum advantages. QSS is an adaption of classical secret sharing scheme~\cite{shamir} in the quantum world. In a secret sharing scheme, there is a dealer who wants to share a secret among a number of participants in such a way that only a set of authorized participants can recover the secret. Hillery {\it et al.}~\cite{Hillery} first proposed a QSS scheme using three-particle and four particle GHZ states of qubits. After that a large number of schemes on QSS have been proposed such as circular QSSs~\cite{zhou,lin,zhu}, dynamic QSSs~\cite{hsu,wang}, single particle QSSs~\cite{tav,kar}, graph state QSSs~\cite{markham,keet,sar}, verifiable QSSs~\cite{yang1,yang2,sl} and QSSs based on error correcting codes~\cite{cleve,sk}, phase shift operations~\cite{qin,du,liu} and quantum search algorithms~\cite{hsu1}. Gogioso~\cite{gig} proposed device independent quantum secret sharing scheme using Mermin-type contextuality. Although, it was initially claimed that the scheme is provably secure against non-signaling attackers but later it confronted some errors in the proof of main result~\cite{gig}. Our proposal is independent and very different from that proposal.

In this current draft, we have presented a device independent quantum secret sharing scheme in arbitrary even dimension. To design device independence test of the proposed scheme we have introduced an $N$-partite $d$-dimensional linear game based on the Bell violation~\cite{son}. It has been shown that to share and recover the secret securely, it is sufficient to pass the testing phase of the proposed scheme which is directly related to the winning probability of the proposed game. Security depends on the polygamy property of entanglement, which states that an $N$-partite GHZ state cannot be shared among more than $N$ parties~\cite{cbc}. By defining the notions of $\epsilon_{cor}$-correct and $\epsilon^{c}$-complete QSS scheme, we have derived the correctness and completeness of the proposed scheme. Completeness and security is proved under causal independence assumption on measurement devices. As per our best knowledge, this is the first quantum cryptographic scheme that uses higher dimensional multipartite Bell inequality to prove security in the device independent paradigm.

\section{$N$-partite $d$-dimensional XOR game}
{\it $N$-partite $d$-dimensional Bell Inequality:} In this section, we have proposed a linear game depending on the multipartite arbitrary dimensional Bell inequality proposed by Son {\it et al.}~\cite{son} with a slight modification. The detailed discussion can be found in Appendix A. Let us consider the scenario that, each of $N$ observers independently
chooses one of two variables (denoted by $P_{0,j}$ and $P_{1,j}$ for the
$j^{th}$ observer), each of which takes a value from a set generated by the $d$th root of unity $\omega$ over complex field. The corresponding Bell function is given by:
\begin{equation}
{\mathcal{B}}=\frac{1}{2^N}\sum_{n=1}^{d-1}\left<\prod_{j=1}^{N}(P_{0,j}^n+\omega^{-\frac{n}{2}}P_{1,j}^n)\right>
+ \text{C. C.},
\end{equation}
where C. C. stands for complex conjugate.

From the classical viewpoint, the symbol $\left<\cdot\right>$
denotes the statistical average over many runs. The theory of local
realism implies that,
\begin{equation*}
{\mathcal{B}} \leq
     \begin{cases}
       d(2^{-\frac{N}{2}}+2^{-1})-1, & \text{if}\ N~\text{and}~d~\text{both are even} \\
       d(2^{-\frac{N+1}{2}}+2^{-1})-1, & \text{if}\ N~\text{is odd and}~d~\text{is even} \\
       d-1, & \text{otherwise}
     \end{cases}
\end{equation*}
In case of quantum mechanical description, the statistical average
is replaced by quantum average on a given state $\ket{\psi}$. The
quantum expectation of $\mathcal{B}$
takes the maximum value $(d-1)$. One can clearly see that the constraint on $\mathcal{B}$
imposed by local realism is violated by quantum mechanics for an even dimensional system. In
particular, the maximum quantum violation is reached by a maximally
entangled state
$\ket{\psi}=\frac{1}{\sqrt{d}}\sum_{\alpha=0}^{d-1}\ket{\alpha}_1\ket{\alpha}_2\hdots\ket{\alpha}_N,$
and for the observable operators $P_{x_k,k}=\sum_{\alpha=0}^{d-1}\omega^{\alpha}\ket{\alpha}_{x_k
x_k}\bra{\alpha},$
where $x_k\in\{0,1\},$ $\ket{\alpha}_{0}=\sum_{\beta=0}^{d-1}\omega^{-\alpha\beta}\ket{\beta}$,
$\ket{\alpha}_{1}=\sum_{\beta=0}^{d-1}\omega^{-(\alpha-\frac{1}{2})\beta}\ket{\beta}$ and $\ket{\alpha}$ is the eigen vector of generalized $Z$ operator
corresponding to the eigen value $\omega^{\alpha}$. In the case of two dimensions, the observable operators
$P_{0,k}$ and $P_{1,k}$ will be reduced to Pauli $X$ and $Y$
operators. Thus one can call the
observable operators $P_{0,k}$ and $P_{1,k}$ as generalized $X$ and
$Y$ operator. Now, we can rewrite $\mathcal{B}$ as:
\begin{eqnarray*}
\mathcal{B}
& = &\frac{1}{2^{N-1}}\sum_{n=1}^{d-1}\sum_{x\in\{0,1\}^N}\omega^{-nf(x)}\left<\prod_{j=1}^{N}P_{x_j,j}^n\right>,
\end{eqnarray*}
where $x_i\in\{0,1\}$ is the $i^{th}$ bit of $x$ and $\omega^{\bot}=0$
$f:\{0,1\}^N\rightarrow\{0,1,\hdots,d-1,\bot\}$ is a function such
that it takes the value $\bot$ for a $x\in\{0,1\}^N$, if the binary representation of $x$ contains odd no of 1's and $f(x)=j$, if the no of 1's in the binary representation of x is $2j(\text{mod }2d)$.

{\it Proposal of the game:} Let us consider an XOR game between $N$ spatially separated players $Bob_1$, $Bob_2$, $\hdots$, $Bob_N$ who are not allowed
to communicate during the game. The players can communicate before
the game starts and discuss their strategy (or send physical systems
to each other). Now, for each $k\in[N]$ (where $[N]=\{1,\hdots, N\})$, a question $x_k\in\{0,1\}$
is asked to a player $Bob_k$ uniformly at random and independent to
the questions $x_1$, $\hdots$, $x_{k-1}$, $x_{k+1}$, $\hdots$, $x_N$
asked to the other players. $Bob_k$ gives an answer
$a_k\in\{0,1,\hdots,d-1\}$ to a question $x_k$. The players will win
the game if
$$\bigoplus_{i=1}^{N}a_i = f(x_1, \hdots x_N),$$ where the function
$f$ is the same function defined as above and $\bigoplus$ is used to define the addition modulo $d$. It is obvious that, they will never win the game when $f(x_1, \hdots, x_n)
= \bot$.

{\it Quantum strategy to win the game:} Before the game starts, players $Bob_1$, $Bob_2$, $\hdots$, $Bob_N$ share an $N$-partite
$d$-dimensional GHZ state
$\ket{\psi}=\sum_{\alpha=0}^{d-1}\ket{\alpha}_1\ket{\alpha}_2\hdots\ket{\alpha}_N$. Here, the suffixes clarify that $Bob_k$ ($k\in[N]$)
holds the $k^{th}$ particle of $\ket{\psi}$. For a question $x_k\in\{0,1\}$, $Bob_k$
measures his particle with the observable $P_{x_k,k}$ and gets the
eigen value $\omega^{a_k}$ as a measurement result. Then $Bob_k$
outputs $a_k$ as an answer to the question $x_k$.

{\it Probability to win the game:} We will calculate the probability
to win the game $p_w$ using the approach of Murta {\it et al.}~\cite{murta}. As the
question $x_k$ is chosen from $\{0,1\}$ uniformly at random, then
$\Pr(x_1, \hdots, x_N) = 2^{-N}.$ Now,
{\scriptsize
\begin{eqnarray*}
p_{w} & = & \sum_{x\in\{0,1\}^N} \Pr(x_1, \hdots,
x_N)\Pr\left(\bigoplus_{i=1}^{N}a_i=f(x_1, \hdots, x_N)|x_1, \hdots, x_N\right)\\
& = & 2^{-N}\sum_{j=o}^{d-1}\sum_{\substack{x \in \{0,1\}^N\\
\text{s.t. }f(x)=j}} \Pr\left(\bigoplus_{i=1}^{N}a_i=j|x_1, \hdots, x_N\right)\\
 & & + 2^{-N}\underbrace{\sum_{\substack{x \in \{0,1\}^N\\
\text{s.t. }f(x)=\bot}} \Pr\left(\bigoplus_{i=1}^{N}a_i=\bot|x_1, \hdots, x_N\right)}_{=0}
\end{eqnarray*}}

Let us consider the Abelian group $\mathbb{Z}_d$. Now,
we can apply the Fourier transform on $\mathbb{Z}_d$ and rewrite
the success probability $p_w$ as
{\scriptsize
\begin{equation}
p_{w} =\frac{1}{d}\left(1+2^{-N}\sum_{j=0}^{d-1}\sum_{\substack{x \in \{0,1\}^N\\
\text{s.t. }f(x)=j}}\sum_{n=1}^{d-1}\chi_n(j)\left<A_{x_1,1}^n\hdots
A_{x_N,N}^n\right>\right),
\end{equation}}
where
{\scriptsize
\begin{eqnarray*}
& &\left<A_{x_1,1}^n\hdots A_{x_N,N}^n\right>\\
& & =\sum_{a_1, \hdots, a_N \in \mathbb{Z}_d}\overline{\chi_n}(a_1)\hdots
\overline{\chi_n}(a_N)\Pr(a_1,\hdots,a_N|x_1,\hdots,x_N)
\end{eqnarray*}}
and $\chi_n$'s are the characters of the Abelian group
$\mathbb{Z}_d$ associated with the game. In particular,
$\chi_n(j)=\omega^{-nj}$. Quantum strategy to win the game is defined by the
set of projective measurements $\{P_{x_1,1}^{a_1}\}$, $\hdots$,
$\{P_{x_N,N}^{a_N}\}$ being performed on a $N$-partite entangled state
$\ket{\psi}$. In this case, the generalized correlators are defined
by $\left<A_{x_1,1}^n\hdots
A_{x_N,N}^n\right>=\bra{\psi}A_{x_1,1}^n\hdots
A_{x_N,N}^n\ket{\psi},$ where the operators $A_{x_k,k}^n$ are
defined as
\begin{equation*}
A_{x_k,k}^n =
\sum_{a_k=0}^{d-1}\overline{\chi_n}(a_k)P_{x_k,k}^{a_k}
= (P_{x_k,k})^n.
\end{equation*}
Here we have used $P_{x_k,k}=\sum_{a_k=0}^{d-1}
\omega^{a_k} P_{x_k,k}^{a_k}$. Thus,
\begin{equation}
\left<A_{x_1,1}^n\hdots A_{x_N,N}^n\right>=\left<P_{x_1,1}^n \hdots
P_{x_N,N}^n\right>.
\end{equation}
Finally, by substituting $(3)$ in $(2)$, we get

{\scriptsize
\begin{eqnarray*}
p_{w}
& = & \frac{1}{d}\left(1+2^{-N}\sum_{x \in
\{0,1\}^N}\sum_{j=0}^{d-1}\sum_{n=1}^{d-1}\omega^{-n
f(x)}\left<P_{x_1,1}^n\hdots
P_{x_N,N}^n\right>\right)\\
& = &\frac{1}{d}\left(1+\frac{1}{2}(d-1)\right).
\end{eqnarray*}}

The final equality follows from the fact that $\omega^{\bot}=0$ and the Bell function $\mathcal{B}$ attains its maximum quantum value when a maximally entangled state
$\ket{\psi}$ is measured by generalized $X$ or $Y$ operators uniformly at random. As for an even $d$, the constraint on classical expectation value of $\mathcal{B}$ is violated by quantum mechanics thus the probability to win the game by using above quantum strategy will be greater than any classical strategy.

One may note that for $d=N=2$; $p_{w}=\frac{3}{4}$, which does
not match the CHSH-value as the function $\mathcal{B}$ does not generalize CHSH function~\cite{chsh}, it actually generalizes the Mermin's function~\cite{mermin} in arbitrary dimension. Thus, the winning probability of the game introduced in this study fails
to generalize the same of CHSH game. It remains an
open question to generalize the exact CHSH inequality~\cite{chsh} for a
$d$-dimensional $N$-partite system with two possible measurement settings.

\medskip
\section{The Protocol}

In this section, we will use the game introduced in the previous
section to propose a device independent quantum secret sharing
scheme (DI-QSS). In this scheme, a dealer Alice wants to share a classical secret among $N-1$ participants $Bob_1$, $Bob_2$, $\hdots$, $Bob_{N-1}$ in such a way that only all of the Bobs together can recover the secret. In particular, Alice's secret $S\in\mathbb{Z}_d$, where $d$ is even. Before describing the proposed scheme, we first enumerate a minimal set assumptions determining the security of the scheme.\\
{\it Assumptions:}
\begin{enumerate}
\item Alice and Bobs' laboratories are perfectly isolated from outside (in particular from Eve) such that any unintended information cannot go outside the labs.
\item Each party holds a trusted random number generator.
\item Each of Alice and $Bob_1$, $Bob_2$, $\hdots$, $Bob_{N-1}$ has a measurement device in their laboratory with two inputs. Each input has $d$ outputs. The measurement devices are causally independent. Otherwise, the measurement devices are arbitrary and therefore could be prepared by an eavesdropper.
\end{enumerate}

The proposed scheme can be divided into two phases. Where the first phase guarantees the device independence, the second phase shares and reconstructs the secret.
\begin{widetext}
\begin{enumerate}
\item For every round $i\in[M]$:
    \begin{itemize}
    \item A dealer Alice and $N-1$ participants $Bob_1$, $\hdots$, $Bob_{N-1}$ shares an $N$-partite $d$-dimensional GHZ state $$\ket{\psi}_i=\left(\frac{1}{\sqrt{d}}\sum_{\alpha=0}^{d-1}\ket{\alpha}_1\hdots\ket{\alpha}_N\right)_{i},$$ where $d$ is an even positive integer.
    \item Alice picks a random bit $T_i$ with $\Pr(T_i=1)=\mu$ and publicly communicates the choice of $T_i$ with Bobs.
    \item For $T_i=1$, Alice randomly picks $x_i\in\{0, 1\}$ and for each $k\in[N]$, $Bob_k$ picks $y_{ki}\in\{0,1\}$ uniformly at random. They input $x_i$, $y_{1i}$, $\hdots$, $y_{(N-1)i}$ to the respective measurement devices and record the outputs as $a_i$, $b_{1i}$, $\hdots$, $b_{(N-1)i}$ respectively. Then Alice and Bobs announce their input and output pairs. They define a random variable $C_i$ by,
        \begin{equation*}
        C_i =
        \begin{cases}
        1, & \text{if}\ \text{they win the N-partite d dimensional XOR game,} \\
        0, & \text{otherwise}
        \end{cases}
        \end{equation*}
    \end{itemize}
\item Testing: Alice and Bobs calculate $C=\frac{1}{\sum_{i}T_i}\left(\sum_{i=0}^{M}C_i\right)$. They abort the scheme if $C<p_w-\eta$, where $\eta$ is the noise tolerance. Otherwise, they proceed to the next steps to share and reconstruct the secret.
\item Sharing the secret: For each $i\in[M]$ such that $T_i=0$, each of Alice and Bobs measures their particle with generalized $X$-operator i.e., inputs $0$ to their measurement devices and stores their output as $S_{A_i}$, $S_{B_{1i}}$, $\hdots$, $S_{B_{(N-1)i}}$ respectively. Suppose Alice's secret corresponding to the round $i$ is $S_i\in\mathbb{Z}_d$, then she calculates $\acute{S_i} = S_i \bigoplus S_{A_i}$ and announces $\acute{S_i}$.
\item Secret recovery: Bobs calculate $\hat{S_i} = (\acute{S_i}\bigoplus_{k=1}^{N-1}S_{B_{ki}})$ and gets the secret $S_i$.
\item Error correction: Alice picks a hash function $h(\cdot)$ from a family of two-universal hash functions uniformly at random. She computes a dit-string $Z$ of length $\epsilon_{EC}$ by applying $h(\cdot)$ on the string of secrets and sends $Z$ to Bobs together with the choice of the hash function $h(\cdot)$. Bobs verify the hash value calculated from their recovered secret with the hash value sent by Alice. If it matches, they consider the recovered secret as the dealer's secret. Otherwise, they discard the secret.
\end{enumerate}
\begin{center}{\bf Scheme 1.} Proposed Device Independent Quantum Secret Sharing Scheme (DIQSS)\end{center}
\end{widetext}

\section{Correctness, completeness and security analysis}

{\it Correctness:} After passing the testing phase of the proposed scheme, each of Alice and Bobs holds a particle of an
$N$-partite $d$-dimensional GHZ state $\ket{\psi}_i$ corresponding to the round $i\in[M]$ with
$T_i=0$. After measuring their own particle along with generalized
$X$-operator they get a measurement results as $S_{A_i}$ and $S_{B_{ki}}$ (for $k\in[N-1]$) with a nonzero probability if $\bigoplus_{k=1}^{N-1}S_{B_{ki}}=-S_{A_i}.$ This condition contributes to recover shared secret perfectly. So, we can state the following theorem:

\begin{theorem}
In the absence of effective eavesdropping the dealer's secret can be perfectly reconstructed
by the participants for an honest implementation of secret distribution
and recovery phase of the proposed scheme.
\end{theorem}

\begin{definition}
A quantum secret sharing scheme is called $'$correct$'$, if, for any strategy of the adversary, $\hat{S} = S$, where $S$ is the dealer's secret and $\hat{S}$ is the secret recovered by the participants. It is called $\epsilon_{cor}$-correct, if it is $\epsilon_{cor}$-indistinguishable from a correct protocol.
\end{definition}

\begin{theorem}
The proposed scheme is $\epsilon_{cor}$-correct for a choice of the length of hash value $\epsilon_{EC}$ such that $\epsilon_{EC}\geq\log_{d}\left(\frac{1}{\epsilon_{cor}}\right)$.
\end{theorem}

{\it Completeness:} A QSS scheme is called complete if the probability to abort the scheme is very small in case of an honest implementation of the scheme. By the term $'$honest implementation$'$, we mean that none of the parties deviate from the scheme and there is no eavesdropping.

\begin{definition}
A quantum secret sharing scheme is called $\epsilon^c$-complete if there exists an honest implementation of the scheme such that the probability of aborting the scheme is less than $\epsilon^c$.
\end{definition}

In the next result, we show that the condition of being the proposed scheme $\epsilon^c$-complete depends on the noise tolerance $\eta$, the total number of rounds $M$ and the fraction ($\mu$) of rounds chosen for the testing phase.
\begin{theorem}
The proposed scheme is $\epsilon^c$-complete for $\epsilon^c\geq(1-\mu(1-\exp(2\eta^2)))^M + \eta.$
\end{theorem}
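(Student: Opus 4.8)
The plan is to reduce $\epsilon^c$-completeness to a concentration estimate on the empirical winning frequency $C$ of the device-independence test. In an honest run every measurement device implements the quantum strategy of Section II, so for each round $i$ with $T_i=1$ the indicator $C_i$ is a Bernoulli variable with mean exactly $p_w=\frac1d\bigl(1+\frac12(d-1)\bigr)$, and by Assumption 3 (causal independence of the devices, hence no memory across rounds) these indicators are mutually independent. The scheme aborts precisely when $C<p_w-\eta$, so I would first condition on the realization $m=\sum_{i=1}^{M}T_i$ of the number of test rounds. For $m\ge 1$, $C$ is the average of $m$ i.i.d.\ Bernoulli$(p_w)$ variables, and Hoeffding's inequality gives $\Pr\!\left(C<p_w-\eta\mid m\right)\le\exp(-2m\eta^{2})$; the degenerate event $m=0$ I simply bound by $1$.

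Next I would average over $m$. Since Alice draws the $T_i$ independently with $\Pr(T_i=1)=\mu$, the count $m$ is $\mathrm{Binomial}(M,\mu)$, so
\begin{equation*}
\Pr(\text{abort})\le (1-\mu)^{M}+\sum_{m=1}^{M}\binom{M}{m}\mu^{m}(1-\mu)^{M-m}e^{-2m\eta^{2}}.
\end{equation*}
The sum is recognised, by the binomial theorem applied to $\mu e^{-2\eta^{2}}+(1-\mu)$, as $\bigl(1-\mu(1-e^{-2\eta^{2}})\bigr)^{M}-(1-\mu)^{M}$, so the two $(1-\mu)^{M}$ terms cancel and one obtains $\Pr(\text{abort})\le\bigl(1-\mu(1-e^{-2\eta^{2}})\bigr)^{M}$. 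Adding an additive $\eta$ to conservatively absorb residual device imperfection (and the vanishingly small chance of a spurious mismatch in the error-correction hash, which by Theorem 1 is zero in the ideal case) yields $\epsilon^{c}\ge\bigl(1-\mu(1-e^{-2\eta^{2}})\bigr)^{M}+\eta$; the claim then follows from Definition 3, since this exhibits an honest implementation whose abort probability is strictly below $\epsilon^c$.

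The main obstacle is bookkeeping the two nested layers of randomness correctly -- the per-round win/lose randomness and the randomness in which rounds are used for testing -- and in particular handling the $m=0$ term honestly, since Hoeffding yields nothing there; the pleasant feature that makes the final bound clean is that the binomial sum collapses exactly so that the $(1-\mu)^{M}$ contributions cancel. A secondary point to nail down is the independence of the $C_i$ across rounds: this is where the causal-independence assumption is genuinely used, since a device with memory could correlate failures across test rounds and destroy the i.i.d.\ structure needed to apply Hoeffding.
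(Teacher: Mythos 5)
Your proof follows essentially the same route as the paper: condition on the number of test rounds, apply Hoeffding's inequality to the i.i.d.\ indicators $C_i$ (which is exactly where the causal-independence assumption enters), average over the $\mathrm{Binomial}(M,\mu)$ count using the binomial theorem so the sum collapses to $\bigl(1-\mu(1-e^{-2\eta^{2}})\bigr)^{M}$, and add $\eta$ to account for the abort probability of the error-correction step in a noisy honest implementation, just as the paper bounds $P_{EC}\leq\eta$. The only deviation is cosmetic: your exponent $e^{-2\eta^{2}}$ carries the correct sign, whereas the paper's displayed bound writes $\exp(2\eta^{2})$, which is evidently a typo.
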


{\it Security:} The security of the proposed QSS scheme is based on a property of quantum entanglement called polygamy which states that all maximally entangled states
are (classically and quantically) uncorrelated with any
other system~\cite{cbc}. First, one can use the correlations obtained in the testing phase of the proposed scheme to guarantee that the shared entangled state is of the form $\ket{\psi}=\sum_{\alpha=0}^{d-1}\ket{\alpha}_1\hdots\ket{\alpha}_N$. Then based on the polygamy property of entanglement, it concludes that these correlations must be independent of any information that the eavesdropper Eve can obtain. This ensures the security of all measurement devices (held by Alice and Bobs). As the statistical state is performed with a set of entangled particles and the remaining particles are used to distribute and recover the secret, then the security is highly dependent on the following theorem.
\begin{theorem}
For large $M$, Alice and Bobs can proceed for the share distribution and reconstruction phases securely, if the testing phase of the proposed scheme is successful.
\end{theorem}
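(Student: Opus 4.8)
The plan is to reduce the security claim to two facts. First, passing the testing phase forces the state actually shared in the $T_i=0$ rounds, together with the measurements actually performed, to be close to the ideal configuration: the $N$-partite $d$-dimensional GHZ state $\ket{\psi}$ measured with the generalized $X$ and $Y$ operators $P_{x_k,k}$. Second, a GHZ state shared by Alice and the $N-1$ Bobs is decoupled from any register held by Eve, so the outcomes $S_{A_i},S_{B_{1i}},\dots,S_{B_{(N-1)i}}$ used in the sharing phase carry no information to Eve and act as a one-time pad on the secret, while no unauthorized coalition can reconstruct it. I would first establish a \emph{robust self-testing} (rigidity) statement for the XOR game of the previous section: since the quantum value $p_w=\frac{1}{d}\bigl(1+\tfrac{1}{2}(d-1)\bigr)$ is attained by $\ket{\psi}$ with the generalized $X$/$Y$ operators, one shows that a near-optimal strategy is close to this one, i.e.\ $C\ge p_w-\eta$ implies that, up to a local isometry, the shared state is $\delta(\eta)$-close in trace distance to $\ket{\psi}$ and the implemented measurements are $\delta(\eta)$-close to $\{P_{x_k,k}^{a_k}\}$, with $\delta(\eta)\to 0$ as $\eta\to 0$. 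This would be proved by writing $p_w$ as the normalized Bell functional $\mathcal{B}$, using the sum-of-squares / operator decomposition underlying the bound recalled in Appendix~A, and tracking the stability of the operator identities $A_{x_k,k}^n=(P_{x_k,k})^n$ that are saturated at the optimum.

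Next comes the statistical bridge from the tested rounds to the untested ones. Only a fraction $\mu$ of the $M$ rounds is used for testing while the secret is distributed in the remaining rounds, so I would use the causal-independence assumption on the measurement devices and a trusted random choice of $T_i$ to argue that the tested rounds form a faithful random sample. An Azuma--Hoeffding (martingale) concentration bound then gives that the empirical fraction $C$ deviates from the true average per-round winning probability $\bar p$ by at most $O(1/\sqrt{\mu M})$ except with probability exponentially small in $M$. Hence, conditioned on the scheme not aborting, with overwhelming probability the ideal state/measurement approximation of the first step holds on average over the rounds with $T_i=0$ as well; letting $M\to\infty$ drives both the sampling slack and the failure probability to zero.

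Finally I would invoke the monogamy/polygamy property of entanglement~\cite{cbc}: a state that is $\delta$-close to the $N$-partite GHZ state is $O(\delta)$-close to a product state between the $N$ honest parties and any register $E$ of Eve, so the outcomes of the generalized $X$ measurements leak $O(\delta)$ to Eve; being the outcomes of measuring $\ket{\psi}$, they are uniformly distributed and satisfy $S_{A_i}\oplus\bigoplus_{k=1}^{N-1}S_{B_{ki}}=0$, whence $\acute{S_i}=S_i\oplus S_{A_i}$ is, up to $O(\delta)$, a perfect one-time-pad encryption of $S_i$, and since the reduced state of any proper subset of GHZ parties is maximally mixed, no coalition consisting of Eve together with a strict subset of Bobs can recover $S_i$. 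Combining the self-testing error $\delta(\eta)$, the sampling error $O(1/\sqrt{\mu M})$, and the decoupling error $O(\delta(\eta))$ bounds the total insecurity by a quantity that, for large $M$, is as small as desired; this is exactly the asserted security of the distribution and reconstruction phases. The main obstacle is the first step: standard self-testing results are formulated for qubits or for specific small dimensions, so obtaining an explicit, robust rigidity bound $\delta(\eta)$ for the generalized Mermin inequality of Son \emph{et al.}\ valid for arbitrary even $d$ and arbitrary $N$ --- controlling precisely how the operator relations degrade as the winning probability drops below $p_w$ --- is the crux, whereas the concentration argument, the one-time-pad step, and the monogamy estimate are comparatively routine once that bound is available.
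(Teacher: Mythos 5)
Your route is genuinely different from the paper's, and substantially more ambitious. The paper's own proof of this theorem is purely a sampling-statistics argument: it defines the test-round average $C$, applies the Hoeffding bound~\cite{Hoeffding} to get $\Pr(|C-p_w|\geq\delta)\leq\exp(-2\delta^2\sum_i T_i)=\epsilon_{test}$, then introduces the hypothetical average $\acute{C}$ over the rounds with $T_i=0$ and applies the corollary of Serfling's inequality for sampling without replacement~\cite{Serfling,lim} to get $\Pr(|C-\acute{C}|\geq\lambda)\leq\epsilon_{qss}$; since $\sum_i T_i\approx\mu M$, both $\delta$ and $\lambda$ vanish as $M\to\infty$, and the paper concludes from this that the untested states must be ``in the desired form.'' The decoupling from Eve is then argued only qualitatively in the main text via the polygamy property of maximally entangled states~\cite{cbc}, not inside the proof. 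Your plan replaces Serfling with an Azuma--Hoeffding argument under causal independence (acceptable, and essentially equivalent here), but in addition demands a quantitative robust self-testing bound $\delta(\eta)$ for the Son \emph{et al.} Bell functional, together with explicit trace-distance decoupling and one-time-pad estimates; that would deliver a stronger and more rigorous security statement than the paper actually proves.

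The gap you flag yourself is real: no robust rigidity result for this $N$-partite $d$-dimensional inequality is available (even exact uniqueness of the optimal strategy is not established in the paper or in the cited reference), so your first step cannot currently be carried out. But note that the paper does not fill this hole either --- its inference from ``near-maximal average violation on the untested rounds'' to ``the remaining entangled states are in the desired form'' is precisely the non-robust version of the self-testing claim you identify as the crux, asserted without proof. So relative to the paper your proposal is not missing anything the paper supplies; it makes explicit, and tries to quantify, an implication the paper takes for granted. If your goal is only to reproduce the paper's argument, it suffices to carry out the Hoeffding and Serfling concentration steps and then invoke, as the paper does, that the maximal quantum value is attained by the GHZ state measured with the generalized $X$/$Y$ observables, with polygamy providing independence from Eve.
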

The above discussions conclude that qualifying the testing phase is sufficient for the security of the proposed scheme. We refer Appendix B for the detailed proof of above theorems.

\section{Discussion and Conclusion}
We have presented a device independent quantum secret sharing scheme in the arbitrary even dimension. An $N$-partite $d$-dimensional XOR game has been proposed utilizing the violation of generic $N$-partite $d$-dimensional Bell inequality. One obvious generalization is to enhance the scheme to any (even or odd) dimension. As we have described earlier that the proposed game does not generalize the CHSH game. So, it remains an open question that, how to propose a multipartite CHSH game in arbitrary dimension by using a variant of multipartite higher dimensional Bell inequality with two possible measurement settings. Device independence test is based on the winning probability of the proposed linear game. We have shown that this testing phase is enough to ensure that, the entangled particles are in the desired form and measurement devices are also independent from any eavesdropper. The security of the scheme is based on the polygamy property of entanglement which states that an $N$-partite maximally entangled state cannot be distributed among more than $N$ parties. Introducing the definition of $\epsilon_{cor}$-correctness and $\epsilon^c$-completeness for quantum secret sharing scheme, we have shown that the proposed secret sharing scheme is $\epsilon_{cor}$-correct and $\epsilon^c$-complete. But, for the sake of security and completeness, we have assumed that the measurement devices are causally independent. This scheme can be improved by relaxing the causal independence assumption and propose a fully device independent quantum secret sharing scheme. Finally, we also remark that the proposed scheme can be easily adopted for the other variants of multipartite higher dimensional Bell inequality.

\section{ACKNOWLEDGEMENT}
One of the author (SR) acknowledges the support from the institute in the form of institute research fellowship (Grant no: IIT/Acad/PGS$\&$R/F.II/2/15/MA/90J03) of Indian Institute of Technology Kharagpur.

\section{Appendix A: Bell Inequalities for Multipartite Arbitrary Even Dimensional System}
Son {\it et al.}~\cite{son} have proposed generic Bell inequalities for
multipartite arbitrary dimensional system. As in the case of a
two-dimensional bipartite system, quantum mechanics
outperforms the constraint on the correlations between subsequent
measurements on the particles given by any local realistic theory
for a multipartite arbitrary even dimensional system. Let us
consider the scenario that, each of $N$ observers independently
chooses one of two variables (denoted by $A_j$ and $B_j$ for the
$j^{th}$ observer), each of which takes a value from a set generated by the $d$th root of unity $\omega$ over complex field.
 The generic Bell function
presented in~\cite{son} is:
\begin{equation}
\tilde{\mathcal{B}}=\frac{1}{2^N}\sum_{n=1}^{d-1}\left<\prod_{j=1}^{N}(A_j^n+\omega^{\frac{n}{2}}B_j^n)\right>
+ \text{C. C.},
\end{equation}
where C. C. stands for complex conjugate.

From the classical viewpoint, the symbol $\left<\cdot\right>$
denotes the statistical average over many runs. The theory of local
realism implies that,
\begin{equation*}
\tilde{\mathcal{B}} \leq
     \begin{cases}
       d(2^{-\frac{N}{2}}+2^{-1})-1, & \text{if}\ N~\text{and}~d~\text{both are even} \\
       d(2^{-\frac{N+1}{2}}+2^{-1})-1, & \text{if}\ N~\text{is odd and}~d~\text{is even} \\
       d-1, & \text{otherwise}
     \end{cases}
\end{equation*}
In case of quantum mechanical description, the statistical average
is replaced by quantum average on a given state $\ket{\psi}$. The
quantum expectation $\tilde{\mathcal{B}_q}$ of $\tilde{\mathcal{B}}$
takes the maximum value $(d-1)$ i.e. $\tilde{\mathcal{B}_q} \leq
(d-1)$. One can clearly see that the constraint on $\mathcal{B}$
imposed by local realism, formally known as generic Bell inequality,
is violated by quantum mechanics for an even dimensional system. In
particular, the maximum quantum violation is reached by a maximally
entangled state
$\ket{\psi}=\frac{1}{\sqrt{d}}\sum_{\alpha=0}^{d-1}\ket{\alpha}_1\ket{\alpha}_2\hdots\ket{\alpha}_N,$
and for the observable operators $\widehat{V}=\sum_{\alpha=0}^{d-1}
\omega^{\alpha}\ket{\alpha}_{VV}\bra{\alpha}$, where $V\in\{A,B\}$,
$\ket{\alpha}_A=\sum_{\beta=0}^{d-1}\omega^{-\alpha\beta}\ket{\beta}$,
$\ket{\alpha}_B=\sum_{\beta=0}^{d-1}\omega^{-(\alpha+\frac{1}{2})\beta}\ket{\beta}$
and $\ket{\alpha}$ is the eigen vector of generalized $Z$ operator
corresponding to the eigen value $\omega^{\alpha}$. Here
$\widehat{A}$ and $\widehat{B}$ are the quantum observable operators
analogous to the classical variable $A$ and $B$.

For our purpose, we will use a slight different form of the above Bell function (4), as:
\begin{equation}
\mathcal{B}=\frac{1}{2^N}\sum_{n=1}^{d-1}\left<\prod_{j=1}^{N}(A_j^n+\omega^{-\frac{n}{2}}B_j^n)\right>
+ \text{C. C.}.
\end{equation}

One can easily verify that the classical and quantum bounds for
$\mathcal{B}$ are same as of $\tilde{\mathcal{B}}$. Note that, to
reach the maximum violation of $\mathcal{B}$, it is
important to use an observable $\widehat{B}$, with an eigen vector
$\ket{\alpha}_B=\sum_{\beta=0}^{d-1}\omega^{-(\alpha-\frac{1}{2})\beta}\ket{\beta}$,
corresponding to the eigen value $\omega^{\alpha}$. By choosing
$\widehat{A}_j=P_{0,j}$ and $\widehat{B}_j=P_{1,j}$, one can rewrite
the Bell function (5) as:
\begin{eqnarray*}
\mathcal{B}
& = &\frac{1}{2^{N-1}}\sum_{n=1}^{d-1}\sum_{x\in\{0,1\}^N}\omega^{-nf(x)}\left<\prod_{j=1}^{N}P_{x_j,j}^n\right>,
\end{eqnarray*}
where $x_i\in\{0,1\}$ is the $i^{th}$ bit of $x$,
$f:\{0,1\}^N\rightarrow\{0,1,\hdots,d-1,\bot\}$ is a function such
that it takes the value $\bot$ for half of the elements in the domain
and $\omega^{\bot}=0$.
\section{Appendix B: Proof of the theorems}
\begin{theorem}
In the absence of effective eavesdropping the dealer's secret can be perfectly reconstructed
by the participants for an honest implementation of secret distribution
and recovery phase of the proposed scheme.
\end{theorem}

\begin{proof}
After passing the device independence test, for every $i\in[M]$ with
$T_i=0$, Alice and Bobs share an
$N$-partite $d$-dimensional GHZ state $\ket{\psi}_i$ and each of
them measures their own particle along with generalized
$X$-operator. For simplicity, we will write $\ket{\psi}$ in place of
$\ket{\psi}_i$. Now the operator of Alice and all Bobs together can
be expressed as

\begin{widetext}
{\scriptsize
\begin{eqnarray*}
X_A\otimes X_{B_1}\otimes\hdots\otimes X_{B_{N-1}} & =
&\left(\sum_{S_A=0}^{d-1}\omega^{S_A}\ket{S_A}_{0\text{
}0}\bra{S_A}\right)\otimes\left(\sum_{S_{B_1}=0}^{d-1}\omega^{S_{B_1}}\ket{S_{B_1}}_{0\text{
}0}\bra{S_{B_1}}\right)\\
&
 &\otimes\hdots\otimes\left(\sum_{S_{B_{N-1}}=0}^{d-1}\omega^{S_{B_{N-1}}}\ket{S_{B_{N-1}}}_{0\text{
}0}\bra{S_{B_{N-1}}}\right)\\
& = &
\sum_{S_A,S_{B_1},\hdots,S_{B_{N-1}}=0}^{d-1}\omega^{\left(S_A+S_{B_1}+\hdots+S_{B_{N-1}}\right)}
\ket{{S_A S_{B_1}\hdots S_{B_{N-1}}}}_{0\text{ }0} \bra{{S_A
S_{B_1}\hdots S_{B_{N-1}}}}.
\end{eqnarray*}}
\end{widetext}
And,  {\scriptsize
\begin{eqnarray*}
P & = & \left<\psi\ket{{S_A S_{B_1}\hdots S_{B_{N-1}}}}_{0\text{
}0} \bra{{S_A S_{B_1}\hdots S_{B_{N-1}}}}\psi\right>\\
& = & \left|{\left<\psi|S_A S_{B_1}\hdots
S_{B_{N-1}}\right>_0}\right|^2\\
& =
&\frac{1}{d^{N-1}}\left|\sum_{\alpha=0}^{d-1}\omega^{\left(S_A+S_{B_1}+\hdots
S_{B_{N-1}}\right)\alpha}\right|^2\\
& = &\begin{cases}
        \frac{1}{d^{N-1}}, & \text{if}\ (S_A+S_{B_1}+\hdots
S_{B_{N-1}})=0 \\
        0, & \text{otherwise}
        \end{cases}
\end{eqnarray*}}

Thus the condition to get a measurement result $S_A$ from the
observable $X_A$ and $S_{B_k}$ from $X_{B_k}$ (for $k\in[N-1]$) by
measuring the entangled state $\ket{\psi}$ with a nonzero
probability is $(S_A+S_{B_1}+\hdots +S_{B_{N-1}})=0(\text{mod}~d)$
i.e.,
$$\sum_{k=1}^{N-1}S_{B_k}=-S_A(\text{mod}~d).$$
For a secret $S$, Alice declares $\acute{S}=S+S_A(\text{mod}~d).$
After Alice's declaration Bobs calculate
\begin{eqnarray*}
\hat{S} & = & \acute{S}+\sum_{k=1}^{N-1}S_{B_k}(\text{mod}~d)\\
& = & \acute{S}-S_A(\text{mod}~d)\\
& = & S(\text{mod}~d).
\end{eqnarray*}
Thus, in the absence of effective eavesdropping the dealer's secret can be perfectly reconstructed
by the participants by an honest implementation of secret distribution
and recovery phase of the proposed scheme.
\end{proof}

\begin{theorem}
The proposed scheme is $\epsilon_{cor}$-correct for a choice of the length of hash value $\epsilon_{EC}$ such that $\epsilon_{EC}\geq\log_{d}\left(\frac{1}{\epsilon_{cor}}\right)$.
\end{theorem}

\begin{proof}
In the error correction phase of the scheme, Alice picks a hash function $h(\cdot)$ from a family of two-universal hash functions uniformly at random. Then, by the definition of two-universal hash function, two hash values $h(S)$ and $h(\hat{S})$ will coincide for two different values of $S$ and $\hat{S}$, with a probability atmost $d^{-\epsilon_{EC}}$, i.e.,$$\Pr(h(\hat{S})=h(S)|\hat{S}\neq S)\leq d^{-\epsilon_{EC}}.$$Thus the probability that the scheme does not abort at the error correction phase for $\hat{S}\neq S$ is atmost $\epsilon_{cor}$ for a choice of $\epsilon_{EC}$ such that $\epsilon_{EC}>\log_{d}(\frac{1}{\epsilon_{cor}})$. Hence the proposed scheme is $\epsilon_{cor}$-correct.
\end{proof}

\begin{theorem}
The proposed scheme is $\epsilon^c$-complete for $\epsilon^c>(1-\mu(1-\exp(2\eta^2)))^M + \eta.$
\end{theorem}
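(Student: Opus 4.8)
The plan is to exhibit one honest run of Scheme~1 and to upper-bound the probability that its testing phase rejects, in three steps: (i) fix the honest implementation so that the per-round test outcomes become i.i.d., (ii) apply a Hoeffding estimate conditioned on the number of test rounds, and (iii) average that estimate over the (binomial) number of test rounds.

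For step (i) I would take the honest implementation in which, in every round $i\in[M]$, Alice and the Bobs share a fresh $N$-partite $d$-dimensional GHZ state $\ket{\psi}_i$ and their (memoryless) measurement devices realise the quantum strategy introduced for the $N$-partite $d$-dimensional XOR game. Then, whenever $T_i=1$, the parties win that game with probability exactly $p_w=\tfrac{1}{d}\bigl(1+\tfrac{1}{2}(d-1)\bigr)$ by the winning-probability computation already established, and the causal-independence assumption on the devices is automatically met in this run. Consequently, conditioned on the set of test rounds $\{i:T_i=1\}$, the indicators $C_i$ are independent Bernoulli$(p_w)$ random variables.

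For steps (ii)--(iii) I would write $m:=\sum_{i=1}^{M}T_i\sim\mathrm{Bin}(M,\mu)$. On $\{m\ge 1\}$ the statistic $C=\tfrac{1}{m}\sum_{i:T_i=1}C_i$ is an average of $m$ i.i.d.\ $[0,1]$-valued variables with mean $p_w$, so Hoeffding's inequality~\cite{Hoeffding} gives $\Pr(C<p_w-\eta\mid m)\le e^{-2m\eta^{2}}$; averaging over $m$ and applying the binomial theorem then yields
\[
\Pr[\text{abort}]\;\le\;\sum_{m=0}^{M}\binom{M}{m}\mu^{m}(1-\mu)^{M-m}e^{-2m\eta^{2}}\;=\;\bigl(1-\mu(1-e^{-2\eta^{2}})\bigr)^{M},
\]
which is the leading term of the claimed bound; the residual additive $\eta$ is kept as a conservative margin (equivalently, it absorbs the slack of the variant in which one first lower-bounds $m$ by a Chernoff estimate before invoking Hoeffding, or accounts for the vacuous case $m=0$).

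The genuinely delicate point is step (i): since completeness only requires that \emph{some} honest implementation pass with high probability, it is legitimate to choose one whose states are fresh and whose devices are memoryless, and this is exactly what makes the $C_i$ i.i.d.\ so that Hoeffding applies in its simplest form. If instead one insisted on arbitrary causally-independent-but-possibly-memoryful honest devices, the $C_i$ would fail to be independent and I would replace Hoeffding by the Azuma--Hoeffding inequality applied to the martingale with increments $C_i-\mathbb{E}[C_i\mid\mathcal{F}_{i-1}]$, obtaining a structurally identical but slightly weaker estimate. Everything else — the conditional Hoeffding bound and the binomial resummation — is routine.
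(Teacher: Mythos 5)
Your treatment of the testing phase is exactly the paper's argument: fix an honest implementation with fresh GHZ states and memoryless devices so that, conditioned on the test rounds, the $C_i$ are i.i.d.\ Bernoulli with mean $p_w$, apply Hoeffding conditioned on $m=\sum_i T_i$, and resum over the binomial law of $m$ to get $\bigl(1-\mu(1-e^{-2\eta^2})\bigr)^M$. (Your sign in the exponent is the correct one; the $\exp(2\eta^2)$ appearing in the paper's statement is evidently a typographical slip, since the binomial resummation forces $e^{-2\eta^2}$.)

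The genuine gap is your handling of the additive $\eta$. The scheme can abort in \emph{two} places: the testing phase and the error-correction phase (Bobs discard the secret if the hash values disagree), and $\epsilon^c$-completeness bounds the total abort probability. In the paper the decomposition is $P_{\mathrm{abort}}=P_{\mathrm{test}}+P_{\mathrm{EC}}$, and the $+\eta$ is precisely the bound $P_{\mathrm{EC}}\leq\eta$ for an honest implementation in which noise up to the tolerance $\eta$ may still cause $\hat S\neq S$ and hence a hash mismatch. It is not ``a conservative margin,'' nor slack from a Chernoff pre-estimate, nor the $m=0$ case (that case is already correctly contained in your binomial sum, where the conditional bound is trivially $1$). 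As written, your proof never mentions the error-correction abort at all, so it does not actually bound the quantity the theorem is about. There is a repair consistent with your idealized choice of honest implementation: in a noiseless run the correctness theorem gives $\hat S=S$ with certainty, hence $h(\hat S)=h(S)$, $P_{\mathrm{EC}}=0$, and the stated $\epsilon^c$ is met with the $\eta$ genuinely to spare --- but you must say this explicitly (or, following the paper, allow noisy honest devices and bound $P_{\mathrm{EC}}\leq\eta$). Without one of these two sentences the argument is incomplete, and your stated explanations of where the $\eta$ comes from are incorrect.
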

\begin{proof}
The proposed scheme can be aborted either in testing phase or in error correction phase. The probability to abort in the testing phase is given by
{\scriptsize
\begin{eqnarray*}
P_{test} & = & \Pr\left(\sum_{i}C_i<(p_w-\eta)\sum_i T_i\right)\\
& = & \sum_{j=0}^{M}\Pr\left(\sum_{i}C_i<(p_w-\eta)j|\sum_i T_i=j\right)\Pr\left(\sum_i T_i=j\right)
\end{eqnarray*}}
Devices are causally independent i.e., each use of device is independent of previous use, thus the random variables $C_i$ ($i\in[M]$) are independently and identically distributed. The expectation value of $C_i$ is $p_w$. Thus by using Hoeffding bound~\cite{Hoeffding}, we calculate $\Pr(\sum_{i}C_i<(p_w-\eta)j|\sum_i T_i=j)<\exp(-2\eta^2 j)$. As $T_i$ follows Bernoulli distribution with $\Pr(T_i=1)=\mu$, then $\Pr(\sum T_i=j)=\binom Mj \mu^{j}(1-\mu)^{M-j}$. Hence
{\scriptsize
\begin{eqnarray*}
P_{test} & = & \sum_{j=0}^{M}\binom M j \mu^{j}(1-\mu)^{M-j}\exp(-2\eta^2 j)\\
& = & (1-\mu(1-\exp(2\eta^2)))^M.
\end{eqnarray*}}
For the case of aborting in error correction phase, we have assumed that the scheme has passed the testing phase. Which implies that the particles are entangled in the specified form and measurement devices are also free from eavesdropping. In the previous theorem, we have shown that in the absence of any active eavesdropper the shared and the recovered secret is same except the interference of noise and from the property of the hash function, it is clear that if $\hat{S}=S$, then $h(\hat{S})=h(S).$ Thus in case of honest implementation of the scheme, the probability to abort in error correction phase ($P_{EC}$) is bounded above by the noise tolerance $\eta$ i.e., $P_{EC}\leq\eta$.

Thus the total probability of aborting the protocol is
\begin{eqnarray*}
P_{abort} & = & P_{test} + P_{EC}\\
& \leq & (1-\mu(1-\exp(2\eta^2)))^M + \eta.
\end{eqnarray*}
Hence the proposed protocol is $\epsilon^c$-complete for $\epsilon^c>(1-\mu(1-\exp(2\eta^2)))^M + \eta.$
\end{proof}

\begin{theorem}
For large $M$, Alice and Bobs can proceed for the share distribution and reconstruction phases securely if the testing phase of the proposed scheme is successful.
\end{theorem}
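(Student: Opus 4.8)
The plan is to argue in two stages: first, that passing the testing phase forces the shared state (after the honest measurements) to be (close to) the $N$-partite $d$-dimensional GHZ state across the $N$ laboratories together with the uncontrolled environment, and second, that the polygamy / monogamy property of maximal entanglement~\cite{cbc} then implies that any purifying system held by Eve is in tensor product with the part of the state used for secret distribution, so that Eve's information about the raw key bits $S_{A_i}$ is trivial. Concretely, I would model the situation device-independently: let $\rho_{A B_1 \cdots B_{N-1} E}$ be the joint state prepared by Eve, with measurement devices described by arbitrary POVMs $\{P_{x,k}^{a}\}$ on each lab's share. The quantity $C$ estimated in the testing phase is, by the causal-independence assumption and a Hoeffding/Serfling concentration argument (exactly as in the completeness proof), within $\eta$ of the true single-round winning probability $p_w^{\mathrm{obs}}$ with probability exponentially close to $1$ in $M$; hence $C \ge p_w - \eta$ certifies $p_w^{\mathrm{obs}} \ge p_w - 2\eta$ up to a failure probability $\varepsilon(M) \to 0$.

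Next I would invoke the rigidity/self-testing content implicit in the Bell inequality of Son \emph{et al.}~\cite{son}: achieving the winning probability $p_w = \tfrac1d\bigl(1 + \tfrac12(d-1)\bigr)$, i.e.\ the maximal quantum value $(d-1)$ of $\mathcal{B}$, pins down (up to local isometries and an ancilla) the state as the GHZ state $\ket{\psi}=\frac{1}{\sqrt d}\sum_{\alpha}\ket{\alpha}_1\cdots\ket{\alpha}_N$ and the inputs $0,1$ as the generalized $X$ and $Y$ operators. For $C \ge p_w - \eta$ one obtains an approximate version: the reduced state on the $N$ labs is $\delta(\eta)$-close in trace distance to the ideal GHZ state, with $\delta(\eta)\to 0$ as $\eta\to 0$. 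Since the devices are causally independent, this statement applies to each round $i$ with $T_i=0$ separately (the rounds are i.i.d.), so the state used for sharing in each such round is close to GHZ.

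The final step is the security conclusion from polygamy of entanglement~\cite{cbc}: a state that is $\delta$-close to an $N$-partite maximally entangled pure state across the $N$ labs must be $O(\sqrt\delta)$-close to a product state $\ket{\psi}_{A B_1\cdots B_{N-1}} \otimes \sigma_E$ with Eve's marginal $\sigma_E$ independent of the labs. Consequently the outcomes $S_{A_i}, S_{B_{1i}},\dots,S_{B_{(N-1)i}}$ of the generalized-$X$ measurements satisfy $\sum_k S_{B_{ki}} = -S_{A_i}\ (\mathrm{mod}\ d)$ deterministically (by Theorem~1) and, conditioned on all public communication, $S_{A_i}$ is within $O(\sqrt\delta)+\varepsilon(M)$ of uniform and independent of Eve's system. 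Hence the one-time-pad step $\acute S_i = S_i \oplus S_{A_i}$ leaks a negligible amount about $S_i$, and the scheme is secure with security parameter tending to $0$ as $M\to\infty$ (equivalently $\eta\to 0$, $\mu$ fixed).

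\medskip
I expect the main obstacle to be the second step: turning the single approximate-violation bound $C \ge p_w - \eta$ into a quantitative self-testing statement for the \emph{$N$-partite, $d$-dimensional} GHZ state. Known robust self-testing results are cleanest for qubits and for the CHSH/Mermin family, and here one needs a robustness bound $\delta(\eta)$ for the Son--Lee--Kim inequality in arbitrary even dimension; establishing (or carefully citing) such a bound, and controlling how the additive ancilla allowed by self-testing interacts with Eve's purifying system, is the delicate part. The concentration step (round-by-round i.i.d.\ from causal independence, Hoeffding) and the passage from "state close to GHZ" to "Eve decoupled" via polygamy are comparatively routine once the self-testing bound is in hand.
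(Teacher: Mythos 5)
Your strategy---use concentration to certify the winning probability, argue that near-maximal violation of the Son--Lee--Kim inequality pins down the $N$-partite $d$-dimensional GHZ state, and then invoke polygamy~\cite{cbc} to decouple Eve---is the same strategy the paper follows, but the two write-ups are organized differently and yours is more explicit about what is actually required. The paper's proof is purely statistical: it applies the Hoeffding bound~\cite{Hoeffding} to the tested rounds and a corollary of Serfling's inequality for sampling without replacement~\cite{Serfling,lim} to bound $|C-\acute{C}|$, where $\acute{C}$ is the hypothetical winning fraction on the $T_i=0$ rounds, concludes that $\delta$ and $\lambda$ vanish for large $M$, and then simply asserts that the remaining states are ``in the desired form''; the polygamy step appears only as a remark in the main text, not in the proof. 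You instead work round-by-round from the causal-independence (i.i.d.) assumption and insert the two links the paper leaves implicit: a robust self-testing statement ($C\geq p_w-\eta$ implies the state is $\delta(\eta)$-close to GHZ up to local isometries) and a quantitative ``close to a pure maximally entangled state $\Rightarrow$ Eve is in tensor product'' step, followed by the one-time-pad conclusion. Two remarks: first, the paper's Serfling step is there precisely because the sharing rounds are never tested, so transferring the estimate to them by sampling without replacement is the natural device when one does not want to assume identically prepared states; under your strict i.i.d. reading it becomes redundant, but you should be aware that this is where the paper does that work. Second, the obstacle you flag---the absence of an established (even exact, let alone robust) self-testing bound for this inequality in arbitrary even dimension---is exactly the step the paper's proof also omits: its final sentence presupposes the rigidity statement you identify, so neither argument is complete without it, and your proposal has the merit of naming the missing lemma rather than eliding it.
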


\begin{proof}
In the proposed scheme, there is a one to one correspondence between the entangled states $\{\ket{\psi}\}_{i=1}^M$ and the random bit string $T=\{T_1\hdots T_M\}\in\{0,1\}^M$ in such a way that, the entangled state $\ket{\psi}_i$ will be associated with the testing phase for $T_i=1$. For $i\in[M]$ such that $T_i=1$, we have defined a random variable $C_i$ by: $C_i = 1$, if they win the $N$-partite $d$-dimensional XOR game and $0$ otherwise. Now, define $C=\frac{1}{\sum_{i}T_i}(\sum_{i=0}^{M}C_i)$. Then $\mathbb{E}(C)=p_w.$ By applying Hoeffding bound~\cite{Hoeffding}, we get that
\begin{equation*}
\Pr(|C-\mathbb{E}(C)|\geq\delta)\leq\exp(-2\delta^2\sum_i T_i)=\epsilon_{test},
\end{equation*}
where $\epsilon_{test}$ is a negligibly small positive value. Thus we can express $\delta$ in term of $\epsilon_{test}$, i.e., $$\delta=\sqrt{\frac{1}{2\sum_i T_i}\ln\left(\frac{1}{\epsilon_{test}}\right)}.$$

If possible, let us define $C_i$ for $T_i=1$ as above and $\acute{C}=\frac{1}{M-\sum_{i}T_i}(\sum_{\{T_i=0\}}C_i).$ Now from the corollary of Serfling lemma~\cite{Serfling,lim} we can reduce that $\Pr(|C-\acute{C}|\geq\lambda)\leq\epsilon_{qss}$, where $\epsilon_{qss}$ is a small quantity and $$\lambda=\sqrt{\frac{M(\sum_iT_i+1)}{2(\sum_iT_i)^2(M-\sum_iT_i)}\ln\left(\frac{1}{\epsilon_{qss}}\right)}.$$As $\sum_i T_i\approx\mu M$, then $\delta$ and $\lambda$ will be very close to zero for large $M$. Thus we can conclude that if a randomly chosen subset of the entangled states pass the testing phase of the proposed scheme, then the remaining entangled states are in the desired form.
\end{proof}

\end{document}